\newtheorem{thm}{Theorem}
\newtheorem{prop}{Proposition}
\newtheorem{cor}{Corollary}
\newtheorem{lem}{Lemma}
\newtheorem{rem}{Remark}
\begin{document}

\title{Minimax Bounds for Distributed Logistic Regression}

\author{\name Leighton Pate Barnes \email lpb@stanford.edu \\
       \addr Department of Electrical Engineering\\
       Stanford University\\
       Stanford, CA 94305, USA
       \AND
       \name Ayfer \"Ozg\"ur \email aozgur@stanford.edu \\
       \addr Department of Electrical Engineering\\
       Stanford University\\
       Stanford, CA 94305, USA
       }

\editor{}


\maketitle

\begin{abstract}
We consider a distributed logistic regression problem where labeled data pairs $(X_i,Y_i)\in \mathbb{R}^d\times\{-1,1\}$ for $i=1,\ldots,n$ are distributed across multiple machines in a network and must be communicated to a centralized estimator using at most $k$ bits per labeled pair. We assume that the data $X_i$ come independently from some distribution $P_X$, and that the distribution of $Y_i$ conditioned on $X_i$ follows a logistic model with some parameter $\theta\in\mathbb{R}^d$. By using a Fisher information argument, we give minimax lower bounds for estimating $\theta$ under different assumptions on the tail of the distribution $P_X$. We consider both $\ell^2$ and logistic losses, and show that for the logistic loss our sub-Gaussian lower bound is order-optimal and cannot be improved.
\end{abstract}


\section{Introduction}

Modern datasets are often distributed and processed across multiple machines in a network which allows for a reduction in computation time by taking advantage of data parallelism. There are also an increasing number of applications where data is generated at remote nodes and learning a model from it requires exchanging information between the nodes over potentially slow and unreliable wireless links (e.g., \cite{federated0}). It is now well-understood that communication bandwidth can be an important bottleneck in the performance of such distributed machine learning systems. This has led to significant recent interest in techniques that demonstrate improvements in the communication efficiency of distributed learning (e.g., \cite{deepgrad,federated2}).

From a statistical learning  perspective, these observations motivate a fundamental problem: understanding the tradeoff between communication cost and the achievable performance for a given statistical task.  In a common, simplified setting, we consider i.i.d. samples $X_i$ for $i=1,\ldots,n$ which are distributed across different machines in a network and must be communicated across rate-limited links. For example, each sample could be communicated with $k$ bits to a centralized estimator. In the works \cite{duchi,braverman, garg}, the focus is on the Gaussian location model and its variants, where the samples $X_i$ are taken from a normal distribution $\mathcal{N}(\theta,\sigma^2 I_d)$ and the statistical task is to estimate the mean parameter $\theta$. In other works, the statistical task of interest is estimating a high-dimensional discrete distribution \citep{diakonikolas,yanjun2} or distributed property testing \citep{archayaetal,archayaetal2}. Of particular note are recent works by \cite{yanjun,allerton,isit} that prove general minimax lower bounds that apply to many distributed statistical estimation problems.

Logistic regression is a powerful and commonly used binary classification tool \citep{Hastie--Tibshirani--Friedman2009elements}. With it, we model labeled data points $(x,y)\in\mathbb{R}^d\times\{-1,1\}$ as being approximately separable by a hyperplane into two classes corresponding to the labels $y$, with a point having a greater likelihood of having a given label if it is further from the hyperplane on the corresponding side. If we have an estimate $\hat\theta$ of the hyperplane's normal vector, then given a new sample $x$ we can predict its label via the inner product $\langle \hat\theta,x \rangle$. In this paper, we aim to study the trade-off between the communication cost, i.e. the number of bits we are allowed to transmit about each labeled data pair $(X,Y)$, and the minimax risk of the underlying estimation problem in distributed logistic regression. There has been some recent work on characterizing the minimax risk for logistic regression under privacy constraints \citep{duchi2019lower}, but to the authors' knowledge this is the first work that considers this problem under communication constraints. We give minimax lower bounds that are general in that they apply to many distributions of $X$, whereas \cite{duchi2019lower} consider only $X\sim\text{unif}\{-1,1\}^d$, and we show that in some cases our bounds are order-optimal.

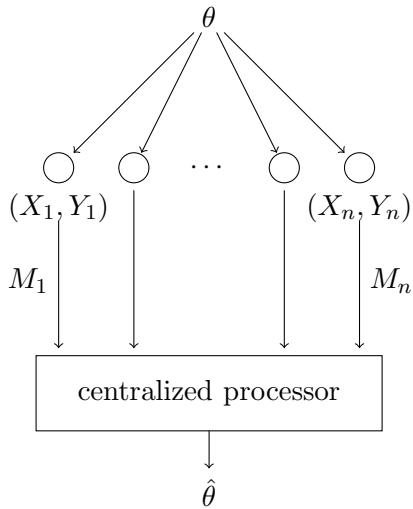
\begin{figure}
\centering{
\begin{tikzpicture}
\node at (2, 2) {$\theta$}; 
\draw [->] (1.8, 1.8) -- (0.2, 0.2); 
\draw [->] (1.9, 1.8) -- (1.1, 0.2); 
\draw [->] (2.1, 1.8) -- (2.9, 0.2); 
\draw [->] (2.2, 1.8) -- (3.8, 0.2); 
\draw (0,0) circle (0.2cm); \node [below] at (0,-0.2) {$(X_1,Y_1)$}; 
\draw (1,0) circle (0.2cm); 
\node at (2,0) {$\cdots$}; 
\draw (3,0) circle (0.2cm);
\draw (4,0) circle (0.2cm); \node [below] at (4,-0.2) {$(X_n,Y_n)$}; 
\draw [->] (0, -0.7) -- (0, -2.4); \draw [->] (1, -0.3) -- (1, -2.4); 
\draw [->] (3, -0.3) -- (3, -2.4); \draw [->] (4, -0.7) -- (4, -2.4);
\draw (-0.3,-3.5) rectangle (4.3,-2.5);  
\node at (2,-3) {centralized processor}; 
\draw [->] (2,-3.5) -- (2, -4); \node [below] at (2,-4) {$\hat{\theta}$}; 
\node [left] at (0, -1.5) {$M_1$}; \node [right] at (4,-1.5) {$M_n$}; 
\end{tikzpicture}
}
\caption{The distributed logistic regression problem, where the data are labeled pairs $(X_i,Y_i)$ that are encoded into $k$-bit messages $M_i$. The parameter $\theta$ is the parameter for the logistic model $P_{Y|X}$.}
\label{fig1b}
\end{figure}

\subsection{Preliminaries}
We consider a distributed logistic regression problem where labeled data pairs $(X_i,Y_i)$ for $i=1,\ldots,n$ are distributed across multiple machines in a network and must be communicated to a centralized estimator using at most $k$ bits per labeled pair. Suppose that the data $X_i\in\mathbb{R}^d$ are taken i.i.d. from some distribution $P_X$ with density $f(x)$ with respect to a dominating probability measure $\nu$. The labels $Y_i \in \{-1,1\}$ are then taken from a logistic model conditioned on $X_i$, i.e.
\begin{align*}
p_\theta(y_i|x_i) = \frac{1}{1+\exp(-y_i\langle\theta,x_i\rangle)}
\end{align*}
independently of each other, where $\theta$ is a $d$-dimensional parameter from a set of possible values $\Theta\subseteq\mathbb{R}^d$.

The goal in the logistic regression problem is to construct an estimate $\hat\theta$ of the true parameter $\theta$ that minimizes some loss function. This can also be thought of as estimating the hyperplane that best separates the data points with label one from those with label minus one. Each sample $(X_i,Y_i)$ is encoded into $k$ bits  by a (possibly stochastic) function $b_i:\mathbb{R}^d\times \{-1,1\} \to [2^k]$, and the estimator $\hat\theta$ will be a function of the encoded $M_i = b_i(X_i,Y_i)$ for $i=1,\ldots,n$. See Figure \ref{fig1b}. For simplicity, here we consider communication with no interactions between different nodes so that the messages $M_i$ are independent of each other, but our lower bounds can be extended to more general communication strategies that allow for arbitrary interaction between the samples such as blackboard protocols.

We will consider two different loss functions for the error incurred by our estimate $\hat\theta$. The first is squared $\ell^2$ loss where we are interested in characterizing the minimix risk
$$\inf_{(\hat\theta,\{b_i\})}\sup_{\theta\in\Theta}\mathbb{E}_\theta\|\hat\theta-\theta\|_2^2 \; .$$
We use $\mathbb{E}_\theta$ to denote taking the expectation over the $(X_i,Y_i)$ with respect to the joint density
$$p_\theta(x,y) = f(x)p_\theta(y|x) \; .$$
The infimum above is taken over all possible estimators $\hat\theta(M_1,\ldots,M_n)$ and all possible encoding strategies $b_i$ for $i=1,\ldots,n$. We are able to prove lower bounds for this $\ell^2$ risk by using the quantized/distributed Fisher information framework we first introduced in \cite{allerton}. We will also consider a generalized excess risk where we are interested in characterizing
$$\inf_{(\hat\theta,\{b_i\})}\sup_{\theta\in\Theta} \left(\mathbb{E}_\theta[R_\theta(\hat\theta)] - \min_{\theta_0} R_\theta(\theta_0)\right)$$
where
$$R_\theta(\hat\theta) = \mathbb{E}_\theta[\ell(\hat\theta;(X,Y))]$$
is the population risk associated with some loss $\ell$ that is convex in its first argument. In particular we will focus on the logistic loss
$$\ell(\theta;(x,y)) = \log(1+\exp(-y\langle\theta,x\rangle)$$
and show that in this case, a strong convexity argument from \cite{duchi2019lower} can be used to extend the Fisher information results to this new loss function. One reason this is noteworthy is that it is an example where Fisher information can be used to prove minimax lower bounds for a non-$\ell^2$ loss.

We use the super-exponential definition for sub-Gaussian and sub-exponential random variables and say that a random variable $X$ is $\sigma^2$-sub-Gaussian if $\mathbb{E}[\exp\frac{X^2}{\sigma^2}] \leq 2$, and a random variable $X$ is sub-exponential with parameter $\sigma$ if $\mathbb{E}[\exp\frac{|X|}{\sigma}] \leq 2$ \citep{versh}.

We say a random vector $X\in\mathbb{R}^d$ is $\sigma^2$-sub-Gaussian if the projection of $X$ onto any unit vector $u\in\mathbb{R}^d$ is a sub-Gaussian random variable with parameter $\sigma^2$. Similarly, we will say a random vector $X\in\mathbb{R}^d$ is sub-exponential with parameter $\sigma$ if the projection of $X$ onto any unit vector $u\in\mathbb{R}^d$ is a sub-exponential random variable with parameter $\sigma$.

\section{Minimax Results} \label{sec:results}

In this section we state our main minimax risk bounds for the communication-constrained logistic regression problem. For squared $\ell^2$ risk, we give three different lower bounds depending on the tail behavior of the random vector $X$, i.e. whether $X$ comes from a sub-Gaussian, sub-exponential, or finite second moment distribution. Stronger assumptions on the tail of $X$ lead to stronger (larger) lower bounds. In the sub-Gaussian case, we show that this lower bound also applies to a generalized excess risk with the logistic loss. We also show an achievable scheme (an encoding strategy along with an estimator $\hat\theta$) for $X\sim \text{unif}\{-1,1\}^d$ that gives an upper bound that matches the sub-Gaussian lower bound. This shows that for the logistic loss, the sub-Gaussian lower bound is order-optimal and cannot be improved.

\begin{thm}[sub-Gaussian lower bound] \label{thm:subg}
Suppose that $X\sim P_X$ is a $\sigma^2$-sub-Gaussian random vector. Then
$$\sup_{\theta\in\mathbb{R}^d}\mathbb{E}_\theta\|\hat\theta-\theta\|_2^2 \geq c\max\left\{\frac{d}{n\sigma^2},\frac{d^2}{kn\sigma^2}\right\}$$
for any estimator $\hat\theta$ and encoding strategies $\{b_i\}_{i=1,\ldots,n}$, and some absolute constant $c>0$ that is independent of $n,k,d,\sigma^2$.
\end{thm}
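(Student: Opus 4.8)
The plan is to use the quantized/distributed Fisher information framework referenced from the authors' earlier work \cite{allerton}, combined with a van Trees / Bayesian Cram\'er--Rao argument to turn a bound on the available Fisher information into a minimax lower bound. The two terms in the $\max$ correspond to two distinct regimes: the term $\frac{d}{n\sigma^2}$ is the classical (unquantized) statistical limit that holds regardless of communication budget, while $\frac{d^2}{kn\sigma^2}$ captures the additional loss due to the $k$-bit constraint and should dominate when $k \ll d$.

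\textbf{Setting up the Fisher information.} First I would compute the Fisher information of a single uncompressed sample $(X,Y)$ for the logistic model. Writing $\sigma(t)=1/(1+e^{-t})$, the score with respect to $\theta$ is a function of $y\langle\theta,x\rangle$, and the per-sample Fisher information matrix is $I(\theta)=\bE_\theta\!\left[\sigma'(\langle\theta,X\rangle)\,XX^\top\right]$, where the scalar weight $\sigma'(\langle\theta,X\rangle)\in(0,1/4]$ is uniformly bounded. The key quantitative input is that, because $X$ is $\sigma^2$-sub-Gaussian, for any unit vector $u$ we control $\bE[\langle u,X\rangle^2]\lesssim \sigma^2$, so $\mathrm{tr}(I(\theta))\lesssim \sigma^2 d$ and, more usefully, the information along any single direction is $O(\sigma^2)$. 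This bound is what the sub-Gaussian tail buys us, and it is where the parameter $\sigma^2$ enters the final rate. The logistic weight being bounded above ensures the information is never larger than the second moment would suggest.

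\textbf{Incorporating the $k$-bit constraint.} The heart of the argument is the quantized Fisher information bound from \cite{allerton}: passing a sample through a $k$-bit channel $b_i$ can only degrade information, and moreover the Fisher information of the message $M_i$ about $\theta$ obeys a contraction that trades off against the communication budget. Concretely, for each coordinate direction the per-message information is bounded by roughly $\min\{1, 2^k/d\}$ times the unquantized information, or equivalently the total usable Fisher information across all $d$ directions is capped at something like $k$ (up to the sub-Gaussian scale $\sigma^2$) rather than the full $d$. Summing over $n$ independent messages and applying the multivariate van Trees inequality with a suitable smooth prior supported on a ball in $\bR^d$ then yields $\sup_\theta \bE_\theta\|\hat\theta-\theta\|_2^2 \gtrsim \frac{d}{\mathrm{tr}(\text{total Fisher info})} \gtrsim \max\{\tfrac{d}{n\sigma^2}, \tfrac{d^2}{kn\sigma^2}\}$, where the first term comes from ignoring quantization (using $I\preceq O(\sigma^2)I_d$ per sample) and the second from the quantization contraction.

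\textbf{Main obstacle.} The genuinely delicate step is establishing the quantized Fisher information contraction correctly for this logistic model rather than merely quoting the Gaussian-location version. Unlike the Gaussian location family, here the score depends on $\theta$ through the random direction $\langle\theta,X\rangle$ and the information matrix is not a scalar multiple of the identity, so I must argue that the per-message information in each direction is simultaneously bounded and that these bounds aggregate to a $\mathrm{tr}$ constraint of order $\min\{d, k\}\sigma^2$. Making this rigorous requires a strong data-processing / tensorization inequality for Fisher information under $k$-bit quantization that is uniform over the prior's support, and checking that the sub-Gaussian assumption (as opposed to a bounded or sub-exponential assumption) is exactly what makes the per-direction information $O(\sigma^2)$. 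I expect the van Trees prior regularity conditions and the interchange of the quantization bound with the $\max$ structure to be the most technical part, whereas the final optimization producing the two regimes should be routine once the information budget $\mathrm{tr}(\text{total info})\lesssim n\sigma^2\min\{d,k\}$ is in hand.
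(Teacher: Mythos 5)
Your overall architecture matches the paper's: reduce via the multivariate van Trees inequality to an upper bound on the trace of the Fisher information carried by a single $k$-bit message, observe that the unquantized information gives the $d/(n\sigma^2)$ term, and let a communication-dependent cap give the $d^2/(kn\sigma^2)$ term. However, the entire content of the theorem is the cap $\mathsf{Tr}(I_M(\theta)) \leq Ck\sigma^2$, and your proposal does not prove it --- you explicitly defer it as the ``main obstacle.'' Worse, the one quantitative form you offer for the contraction, a per-direction factor of $\min\{1, 2^k/d\}$, would give a total information budget of order $\min\{d, 2^k\}\sigma^2$; that is the \emph{exponential}-in-$k$ bound which holds under only a second-moment assumption (Theorem~\ref{thm:fsm} / Lemma~\ref{lem3} in the paper) and would yield the much weaker term $d^2/(2^k n\sigma^2)$. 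The linear-in-$k$ scaling is exactly what sub-Gaussianity of the score buys, and obtaining it requires a different argument than a per-direction data-processing contraction.

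The paper's route through this gap has two pieces you are missing. First, the reduction to a model-agnostic lemma: the score of the logistic model is $S_\theta(x,y) = \frac{\exp(-y\langle\theta,x\rangle)}{1+\exp(-y\langle\theta,x\rangle)}\,y\,x$, i.e.\ $X$ multiplied by a scalar of magnitude at most one, so $\langle u, S_\theta(X,Y)\rangle$ inherits the $\sigma^2$-sub-Gaussian tail of $\langle u,X\rangle$ pointwise (Proposition~1). This dissolves the issue you worry about regarding the $\theta$-dependence of the information matrix and uniformity over the prior's support: no Gaussian-location structure is needed, only a tail bound on the score. Second, the actual cap (Lemma~\ref{lem1}): using the identity $\mathsf{Tr}(I_M(\theta)) = \sum_m p_\theta(m)\|\mathbb{E}_\theta[S_\theta(X,Y)\mid m]\|_2^2$, one applies Jensen's inequality to $\mathbb{E}[\exp(\langle u,S\rangle^2/\sigma^2)]\le 2$ restricted to the event $M=m$ to get $\|\mathbb{E}_\theta[S_\theta(X,Y)\mid m]\|_2 \le \sigma\sqrt{\log(2/p_\theta(m))}$, and then a concavity (upper-concave-envelope) argument over the at most $2^k$ messages bounds $\sum_m p_\theta(m)\log(2/p_\theta(m)) \le (k+1)\log 2$, giving $\mathsf{Tr}(I_M(\theta)) \le C k\sigma^2$. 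Without this entropy-like step --- which is where the number of messages $2^k$ enters only logarithmically --- you cannot obtain the $1/k$ (as opposed to $1/2^k$) dependence claimed in the theorem.
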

The proof method for Theorem \ref{thm:subg} and the subsequent $\ell^2$ lower bounds is sketched in Section \ref{sec:FI}. For the logistic loss, the following corollary is implied by a strong convexity argument detailed in Section \ref{sec:convexity}.
\begin{cor}[sub-Gaussian lower bound with logistic loss] \label{cor:subg}
Suppose that $X\sim P_X$ is a $\sigma^2$-sub-Gaussian random vector whose autocorrelation matrix has a minimum eigenvalue $\lambda_\text{min}(\mathbb{E}[XX^T]) \geq \delta > 0$. Then if $nk \geq d^4\sigma^4\log(d\sigma)$,
$$\sup_{\theta\in\mathbb{R}^d} \left(\mathbb{E}_\theta[R_\theta(\hat\theta)] - \min_{\theta_0} R_\theta(\theta_0)\right) \geq \delta c\max\left\{\frac{d}{n\sigma^2},\frac{d^2}{kn\sigma^2}\right\}$$
for any estimator $\hat\theta$ and encoding strategies $\{b_i\}_{i=1,\ldots,n}$, and some absolute constant $c>0$ that is independent of $n,k,d,\sigma^2,\delta$.
\end{cor}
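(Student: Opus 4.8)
The plan is to connect the excess logistic risk to the squared $\ell^2$ error via a strong convexity argument, and then invoke Theorem~\ref{thm:subg} as a black box. The population risk $R_\theta(\theta_0) = \mathbb{E}_\theta[\log(1+\exp(-Y\langle\theta_0,X\rangle))]$ is convex in $\theta_0$, and the true parameter $\theta$ is its minimizer because the logistic loss is a proper scoring rule (the population minimizer of the logistic loss coincides with the parameter generating the labels). Hence $\min_{\theta_0} R_\theta(\theta_0) = R_\theta(\theta)$, and the excess risk becomes $\mathbb{E}_\theta[R_\theta(\hat\theta) - R_\theta(\theta)]$. The strategy is to lower bound this excess risk by a multiple of $\mathbb{E}_\theta\|\hat\theta - \theta\|_2^2$, so that the already-established $\ell^2$ lower bound transfers directly.

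First I would compute the Hessian of $\theta_0 \mapsto R_\theta(\theta_0)$. Since $\nabla^2_{\theta_0} \log(1+\exp(-y\langle\theta_0,x\rangle)) = \psi''(\langle\theta_0,x\rangle)\, xx^T$ where $\psi(t)=\log(1+e^{-t})$ has $\psi''(t) = e^{-t}/(1+e^{-t})^2 = 1/(4\cosh^2(t/2))$, the Hessian is $\nabla^2 R_\theta(\theta_0) = \mathbb{E}_\theta[\, \psi''(\langle\theta_0,X\rangle)\, XX^T\,]$, which is positive semidefinite everywhere, confirming convexity. A second-order Taylor expansion of $R_\theta$ around its minimizer $\theta$ gives
\begin{align*}
R_\theta(\hat\theta) - R_\theta(\theta) = \tfrac{1}{2}(\hat\theta-\theta)^T \nabla^2 R_\theta(\tilde\theta)(\hat\theta-\theta)
\end{align*}
for some $\tilde\theta$ on the segment between $\theta$ and $\hat\theta$ (using that the first-order term vanishes at the minimizer). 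The remaining task is to show the Hessian is bounded below by a strongly-convex constant, i.e. $\nabla^2 R_\theta(\tilde\theta) \succeq c'\delta\, I$ over the relevant region, which would yield $R_\theta(\hat\theta)-R_\theta(\theta) \geq \tfrac{c'\delta}{2}\|\hat\theta-\theta\|_2^2$ and hence the corollary.

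The main obstacle is controlling $\psi''(\langle\tilde\theta,X\rangle)$ from below: the factor $\psi''(t)$ decays like $e^{-|t|}$, so on events where $|\langle\tilde\theta,X\rangle|$ is large it contributes almost nothing, and one cannot naively pull a constant out of the expectation. The eigenvalue assumption $\lambda_\text{min}(\mathbb{E}[XX^T])\geq\delta$ handles the $XX^T$ part, but we must ensure that the inner product stays bounded with high enough probability. This is where the sub-Gaussian tail of $X$ and the sample-complexity condition $nk \geq d^4\sigma^4\log(d\sigma)$ enter: I expect the argument to restrict attention to a ball of radius comparable to the $\ell^2$ rate around $\theta$, argue that a good estimator $\hat\theta$ (and hence the intermediate $\tilde\theta$) lies in this ball with high probability, and then use the sub-Gaussianity to show that $\mathbb{E}[\psi''(\langle\tilde\theta,X\rangle)\,XX^T] \succeq c'\delta I$ uniformly over that ball. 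The condition $nk \geq d^4\sigma^4\log(d\sigma)$ presumably guarantees the ball is small enough (the rate $d^2/(kn\sigma^2)$ is small) that the typical magnitude of $\langle\tilde\theta,X\rangle$ stays $O(1)$, keeping $\psi''$ bounded away from zero on a set of $X$ carrying enough of the second-moment mass. Carefully quantifying this truncation—balancing the tail probability of large $|\langle\tilde\theta,X\rangle|$ against the lost eigenvalue mass—following the approach of \cite{duchi2019lower}, is the crux of the proof.
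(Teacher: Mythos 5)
Your high-level plan---lower bound the excess logistic risk by a quadratic via strong convexity of the population risk near the origin, control the Hessian $\mathbb{E}[\psi''(\langle\tilde\theta,X\rangle)XX^T]$ using the eigenvalue condition plus sub-Gaussian concentration of $\langle\tilde\theta,X\rangle$, and then fall back on the Fisher-information machinery---is exactly the paper's strategy. But there are two concrete gaps. First, your treatment of estimators outside the strong-convexity region does not work: you propose to ``argue that a good estimator $\hat\theta$ (and hence $\tilde\theta$) lies in this ball with high probability,'' but in a minimax lower bound the estimator is arbitrary and you may assume nothing about its quality; the intermediate point $\tilde\theta$ in your Taylor expansion can land where $\psi''$ is exponentially small. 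The paper's fix is deterministic: strong convexity on $\|\hat\theta\|_2\le r$ plus global convexity give the piecewise bound of equation \eqref{eq:cvx} (quadratic inside, linear outside), and since that right-hand side is nondecreasing in $\|\hat\theta-\theta\|_2$ while $\|\theta\|_2\le r/4$, projecting $\hat\theta$ onto the ball of radius $r/4$ can only decrease it. Hence one may WLOG replace $\hat\theta$ by its projection, which is itself a valid estimator with $\|\hat\theta-\theta\|_2\le r/2$, and the clean quadratic bound applies. Without this projection step your argument is incomplete.

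Second, you misattribute the role of the condition $nk\ge d^4\sigma^4\log(d\sigma)$ and you cannot invoke Theorem~\ref{thm:subg} as a pure black box. The radius $r$ is not ``comparable to the $\ell^2$ rate''; it is forced to be small, $r=\Theta(d^{-1}\sigma^{-3}[\log(d\sigma)]^{-1/2})$, so that $|\langle\hat\theta,X\rangle|=O(1)$ with probability $1-\alpha$ and $\psi''$ stays near $1/4$ on enough of the second-moment mass (with the residual matrix $A$ controlled as in \eqref{eq:A_entry}). Restricting $\theta$ to the cube $[-B,B]^d$ with $B=\Theta(r/\sqrt{d})$ means the proof of Theorem~\ref{thm:subg} cannot take $B\to\infty$; one must keep the prior term $d\pi^2/B^2$ in the van Trees denominator \eqref{eq:vt2} and check it is dominated by $n\,\mathsf{Tr}(I_M(\theta))\lesssim nk\sigma^2$. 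That comparison is precisely what $nk\ge d^4\sigma^4\log(d\sigma)$ guarantees---it is a largeness condition on $nk$ compensating for the forced smallness of the parameter cube, not a condition ensuring the $\ell^2$ rate ball is small.
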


The minimum eigenvalue assumption is a mild technical condition that is satisfied by many distributions of interest. For example, if $P_X$ is a product distribution with mean zero then this condition is trivially satisfied with $\delta$ being the minimum variance of the different components. The condition on $n$ is also a technical assumption that will be needed to ignore the second term in the denominator of \eqref{eq:vt2} below.

If we relax the sub-Gaussian assumption from Theorem \ref{thm:subg} to a sub-exponential or finite second-moment assumption, then instead of the linear dependence on $k$ in the denominator we get a quadratic or exponential dependence on $k$, respectively. This can be seen in the following two theorems.

\begin{thm}[sub-exponential lower bound] \label{thm:sube}
Suppose that $X\sim P_X$ is a sub-exponential random vector with parameter $\sigma$. Then
$$\sup_{\theta\in\mathbb{R}^d}\mathbb{E}_\theta\|\hat\theta-\theta\|_2^2 \geq c\max\left\{\frac{d}{n\sigma^2},\frac{d^2}{k^2n\sigma^2}\right\}$$
for any estimator $\hat\theta$ and encoding strategies $\{b_i\}_{i=1,\ldots,n}$, and some absolute constant $c>0$ that is independent of $n,k,d,\sigma^2$.
\end{thm}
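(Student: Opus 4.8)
The plan is to follow the quantized Fisher information approach used for Theorem \ref{thm:subg}, with the sub-Gaussian tail replaced by the sub-exponential tail; this single change converts the linear dependence on $k$ into a quadratic one. First I would reduce the minimax problem to a Bayesian one by placing a smooth product prior $\lambda$ on a small cube $\{\theta : \|\theta\|_\infty \le r\}$ and invoking the multivariate van Trees inequality, which yields
$$\sup_{\theta}\mathbb{E}_\theta\|\hat\theta-\theta\|_2^2 \ge \frac{d^2}{\mathbb{E}_\lambda\big[\sum_{i=1}^n \mathrm{tr}\,I_{M_i}(\theta)\big] + \mathcal{I}(\lambda)},$$
where $I_{M_i}(\theta)$ is the Fisher information that the $k$-bit message $M_i$ carries about $\theta$ and $\mathcal{I}(\lambda)$, the Fisher information of the prior, is of order $d/r^2$. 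The whole problem then reduces to an upper bound on the per-message quantity $\mathrm{tr}\,I_{M_i}(\theta)$.

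The heart of the argument is this quantized Fisher information bound. Writing $S_\theta(X,Y) = \nabla_\theta\log p_\theta(Y|X) = YX\,s(-Y\langle\theta,X\rangle)$ for the score, with $s(t)=1/(1+e^{-t})$ the logistic function, and $q_m(x,y) = \Pr(b_i(x,y)=m)$ for the (soft) quantizer, one has the identity
$$\mathrm{tr}\,I_M(\theta) = \sum_{m} p_m\|\bar S_m\|_2^2, \qquad p_m = \mathbb{E}_\theta[q_m],\quad \bar S_m = \frac{\mathbb{E}_\theta[q_m S_\theta]}{p_m}.$$
Since $0<s(\cdot)<1$, the projection $\langle u,S_\theta\rangle$ is, up to a constant, as heavy-tailed as $\langle u,X\rangle$, hence sub-exponential with parameter $\sigma$ uniformly in $\theta$ and in the unit vector $u$. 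The vector $\bar S_m$ is a conditional mean of $S_\theta$ under the soft event of probability $p_m$, and for a sub-exponential variable, conditioning on a rare event of probability $p$ can shift the mean by at most a factor of order $\sigma\log(1/p)$; this is the one and only place the tail assumption is used. Consequently $\|\bar S_m\|_2^2 \le C\sigma^2\log^2(1/p_m)$, so that
$$\mathrm{tr}\,I_M(\theta)\;\le\;C\sigma^2\sum_m p_m\log^2\!\frac1{p_m}\;\le\;C'\sigma^2 k^2,$$
the last step because $\sum_m p_m\log^2(1/p_m)$ over $2^k$ atoms is maximized at the uniform distribution, by concavity of $p\mapsto p\log^2(1/p)$ on the relevant range, where it is of order $k^2$. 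The analogous estimate under a sub-Gaussian tail gives the smaller shift of order $\sigma\sqrt{\log(1/p)}$, hence $\sum_m p_m\log(1/p_m)=H(M)\le k\log 2$ and the linear bound $\mathrm{tr}\,I_M(\theta)\le C\sigma^2 k$ behind Theorem \ref{thm:subg}; the $\sqrt{\log}$ versus $\log$ discrepancy is exactly what separates the two rates.

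To finish, I would combine this with the trivial data-processing bound $\mathrm{tr}\,I_M(\theta)\le \mathrm{tr}\,I_X(\theta)=\mathbb{E}\|S_\theta\|_2^2\le \mathbb{E}\|X\|_2^2\le C d\sigma^2$, giving $\mathrm{tr}\,I_{M_i}(\theta)\le C\sigma^2\min\{d,k^2\}$ for every $\theta$ in the prior support. Plugging this into the van Trees bound and choosing the cube radius $r$ so that $\mathcal{I}(\lambda)$, of order $d/r^2$, balances the data term $n\sigma^2\min\{d,k^2\}$ produces
$$\sup_\theta\mathbb{E}_\theta\|\hat\theta-\theta\|_2^2 \;\ge\; \frac{c\,d^2}{n\sigma^2\min\{d,k^2\}} \;=\; c\max\left\{\frac{d}{n\sigma^2},\frac{d^2}{k^2 n\sigma^2}\right\},$$
as claimed. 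I expect the main obstacle to be making the conditional-mean-shift estimate $\|\bar S_m\|_2 \le C\sigma\log(1/p_m)$ fully rigorous and uniform over $\theta$ and over all admissible quantizers $\{q_m\}$ at once — in particular controlling the messages with very small $p_m$, where the $\log^2(1/p_m)$ factor is largest, and verifying that $\sum_m p_m\log^2(1/p_m)\le C k^2$ with an absolute constant.
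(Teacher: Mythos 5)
Your proposal is correct and follows essentially the same route as the paper: van Trees plus the identity $\mathsf{Tr}(I_M(\theta))=\sum_m p_\theta(m)\|\mathbb{E}_\theta[S_\theta(X,Y)\mid m]\|_2^2$, the Jensen-type argument showing that conditioning on a message of probability $p_m$ shifts the sub-exponential score mean by at most $O(\sigma\log(1/p_m))$, and the concavity (upper concave envelope) step giving $\sum_m p_m\log^2(2/p_m)\le O(k^2)$ over $2^k$ messages. The two points you flag as needing care — the conditional-mean-shift bound and the concavity of $x\mapsto x\log^2(2/x)$ only on part of $[0,1]$ — are resolved in the paper exactly as you anticipate, via Jensen applied to $\exp(|\cdot|/\sigma)$ and via the upper concave envelope, which coincides with the function itself for $x\le 1/2$.
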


\begin{thm}[finite second-moment lower bound] \label{thm:fsm}
Suppose that $X \sim P_X$ in $\mathbb{R}^d$ is such that $\mathbb{E}[\langle u,X\rangle^2] \leq I_0$ for any unit vector $u\in\mathbb{R}^d$. Then
$$\sup_{\theta\in\mathbb{R}^d}\mathbb{E}_\theta\|\hat\theta-\theta\|_2^2 \geq c\max\left\{\frac{d}{n\sigma^2},\frac{d^2}{2^knI_0}\right\}$$
for any estimator $\hat\theta$ and encoding strategies $\{b_i\}_{i=1,\ldots,n}$, and some absolute constant $c>0$ that is independent of $n,k,d,\sigma^2$.
\end{thm}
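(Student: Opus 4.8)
The plan is to instantiate the quantized Fisher information framework of Section \ref{sec:FI} (following \cite{allerton}): combine a data-processing bound on the Fisher information carried by each $k$-bit message with the multivariate van Trees (Bayesian Cram\'er--Rao) inequality. First I would choose a prior $\lambda$ on $\theta$ with a smooth density supported on a Euclidean ball of radius $r$ about the origin, so that its Fisher information $\mathcal{J}(\lambda)$ is of order $d/r^2$. Because the messages $M_1,\dots,M_n$ are independent given $\theta$, their total Fisher information is $I_n(\theta)=\sum_{i=1}^n I_{M_i}(\theta)$, and van Trees gives
$$\sup_{\theta\in\mathbb{R}^d}\mathbb{E}_\theta\|\hat\theta-\theta\|_2^2 \;\geq\; \mathbb{E}_\lambda\,\mathbb{E}_\theta\|\hat\theta-\theta\|_2^2 \;\geq\; \frac{d^2}{\sum_{i=1}^n\mathbb{E}_\lambda[\operatorname{tr}(I_{M_i}(\theta))]+\mathcal{J}(\lambda)}.$$
The problem therefore reduces to bounding the per-message trace $\operatorname{tr}(I_{M_i}(\theta))$ uniformly over the support of $\lambda$.

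The key step is the estimate $\operatorname{tr}(I_{M_i}(\theta))\leq\min\{d,2^k\}\,I_0$. Since the encoder $b_i$ does not depend on $\theta$, differentiating $p_m(\theta)=P_\theta(M_i=m)$ under the integral sign gives $\nabla_\theta p_m(\theta)=p_m(\theta)\,\mathbb{E}_\theta[S\mid M_i=m]$, where $S=\nabla_\theta\log p_\theta(X,Y)=\frac{YX}{1+\exp(Y\langle\theta,X\rangle)}$ is the full-sample score. Hence
$$I_{M_i}(\theta)=\sum_m p_m(\theta)\,\mathbb{E}_\theta[S\mid M_i=m]\,\mathbb{E}_\theta[S\mid M_i=m]^T,$$
a sum of at most $2^k$ rank-one matrices, so $\operatorname{rank}(I_{M_i}(\theta))\leq 2^k$. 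The law of total variance applied to $S$ gives $I_{M_i}(\theta)\preceq I_{(X,Y)}(\theta)=\mathbb{E}_\theta[SS^T]$, and for every unit vector $u$,
$$u^T I_{(X,Y)}(\theta)\,u=\mathbb{E}_\theta\!\left[\frac{\langle X,u\rangle^2}{(1+\exp(Y\langle\theta,X\rangle))^2}\right]\leq\mathbb{E}\big[\langle X,u\rangle^2\big]\leq I_0,$$
uniformly in $\theta$, since the logistic weight lies in $(0,1)$. Thus every eigenvalue of $I_{M_i}(\theta)$ is at most $I_0$, and as it has at most $\min\{d,2^k\}$ nonzero eigenvalues the trace bound follows at once.

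Plugging this in and using $\min\{d,2^k\}\leq 2^k$ gives $\sum_i\mathbb{E}_\lambda[\operatorname{tr}(I_{M_i})]\leq n2^kI_0$; choosing $r$ of order $d/\sqrt{n2^kI_0}$ makes $\mathcal{J}(\lambda)=O(d/r^2)$ negligible compared with this Fisher sum and yields the lower bound $c\,d^2/(2^k n I_0)$. The centralized term $d/(n\sigma^2)$ follows from the identical argument with the crude full-sample trace bound $\operatorname{tr}(I_{(X,Y)}(\theta))\leq dI_0$ in place of the quantized one (corresponding to no communication constraint), and taking the larger of the two bounds produces the stated maximum. The main obstacle is not the rank-times-eigenvalue estimate, which is clean, but making the van Trees step fully rigorous and \emph{uniform}: I must verify differentiation under the integral, finiteness of $\mathcal{J}(\lambda)$, and that both the eigenvalue bound $\leq I_0$ and the rank bound $\leq 2^k$ hold for every $\theta$ in the support of $\lambda$ (they do, being $\theta$-free), so that the radius $r$ can be enlarged to swamp the prior contribution while staying in the regime where the estimate is tight. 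This is also precisely the step where the finite-moment assumption forces the exponential factor $2^k$, whereas the sharper tail control available in Theorems \ref{thm:subg} and \ref{thm:sube} replaces it by $k$ and $k^2$.
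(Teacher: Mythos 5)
Your proposal is correct and follows the paper's approach: the van Trees inequality combined with a per-message bound $\mathsf{Tr}(I_M(\theta)) \leq 2^k I_0$ obtained from the representation $I_M(\theta)=\sum_m p_\theta(m)\,\mathbb{E}_\theta[S_\theta|m]\,\mathbb{E}_\theta[S_\theta|m]^T$ together with the observation that the score inherits the second-moment bound of $X$ because the logistic prefactor has magnitude at most one (the paper's Proposition 3 and Lemma 3). The only cosmetic difference is in the proof of the key trace bound: the paper bounds each summand $p_\theta(m)\|\mathbb{E}_\theta[S_\theta|m]\|_2^2\leq I_0$ by Cauchy--Schwarz and sums over the $2^k$ messages, whereas you reach the same conclusion via the rank bound $\mathrm{rank}(I_M)\leq 2^k$ combined with the matrix data-processing inequality $I_M\preceq I_{X,Y}\preceq I_0 I_d$; both arguments are valid and give the same constant.
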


\subsection{An Upper Bound}
In order to demonstrate that the sub-Gaussian lower bound from Corollary \ref{cor:subg} is order-optimal for at least some sub-Gaussian distributions, we will consider the special case $X\sim\text{unif}\{-1,1\}^d$. In this case, if we consider the class-conditional distributions
\begin{align*}
p(x_j|y) & = \frac{e^{y \theta_jx_j/2}}{e^{\theta_j/2}+e^{-\theta_j/2}} = \begin{cases} \frac{e^{\theta_j/2}}{e^{\theta_j/2}+e^{-\theta_j/2}} & \text{ , if } x_j=y \\ \frac{e^{-\theta_j/2}}{e^{\theta_j/2}+e^{-\theta_j/2}} & \text{ , if } x_j\neq y\end{cases}
\end{align*}
for $j=1,\ldots,d$ as independent and identically distributed conditional on $Y$, and $Y\sim\text{unif}\{\pm1\}$, then $Y|X$ follows 
$$p_\theta(y|x) = \frac{\exp(y\langle\theta,x\rangle/2)}{\exp(y\langle\theta,x\rangle/2)+\exp(-y\langle\theta,x\rangle/2)} \; .$$
This is a logistic model with parameter $\theta$, so this setup matches our original joint distribution for $X,Y$. This means that if we only consider the first $k-1$ components of $X$, then they will also follow a logistic model with parameter equal to the first $k-1$ components of $\theta$. Using this observation, consider the following encoding strategy when $2\leq k \leq d$:
\begin{itemize}
\item[(i)] Partition the $d$ components into $m=\frac{d}{k-1}$ groups of $k-1$ components each. We assume $m$ is an integer for simplicity.
\item[(ii)] Assign each sample $X_i$ to one group such that each group is assigned $\frac{n}{m} = \frac{n(k-1)}{d}$ samples.
\item[(iii)] Encode each sample $X_i$ into a $k$-bit message $M_i = b_i(X_i)$ by picking out the $k-1$ components associated with its group and also appending the label $Y_i$.
\end{itemize}
With this strategy, the centralized estimator $\hat\theta$ can then solve $m$ different logistic regression problems each of dimension $k-1$ with an effective sample size of $\frac{n(k-1)}{d}$ for each problem. Using a strategy such as stochastic gradient descent, each problem can be solved with excess logistic risk that is asymptotically order $\frac{d(k-1)}{n(k-1)} = \frac{d}{n}$ (see \cite{bhowmick} Corollary 3.1, or \cite{polyak} Theorem 2), and thus in total
$$\mathbb{E}_\theta[R_\theta(\hat\theta)] - \min_{\theta_0} R_\theta(\theta_0) \leq C\frac{d^2}{kn}$$
for large enough $n$ and an absolute constant $C$.


\section{Fisher Information} \label{sec:FI}

The strategy for proving the minimax lower bounds from Section \ref{sec:results} will be to use the van Trees inequality along with a characterization of the Fisher information from the received messages $M_1,\ldots,M_n$. The van Trees inequality is a Bayesian version of the Cram\'er-Rao lower bound, which in the version we will use from \cite{gill}, bounds the average $\ell^2$ risk by
\begin{align} \label{eq:vt}
\int_\Theta \mathbb{E}_\theta\|\hat\theta-\theta\|_2^2\mu(\theta)d\theta \geq \frac{d^2}{\text{Tr}(I_{M_1,\ldots,M_n}(\theta)) + J(\mu)}
\end{align}
where $\mu(\theta) = \prod_{j=1}^d\mu_j(\theta_j)$ is a prior for the parameter $\theta$, $I_{M_1,\ldots,M_n}(\theta)$ is the Fisher information matrix for estimating $\theta$ from the messages $M_1,\ldots,M_n$, and $J(\mu)$ is the Fisher information associated with the prior $\mu$.

Recall that in the context of Fisher information, the score function associated with the statistical model $p_\theta(x,y)$ is
\begin{align*}
S_\theta(x,y) & = \nabla_\theta\log p_\theta(x,y) \\
& = \left( \frac{\partial}{\partial\theta_1}\log p_\theta(x,y),\ldots,\frac{\partial}{\partial\theta_d}\log p_\theta(x,y)\right) \; .
\end{align*}
The Fisher information matrix for estimating $\theta$ from an encoded sample $M$ has entries
$$[I_M(\theta)]_{i,j} = \mathbb{E}_\theta\left[\frac{\partial}{\partial\theta_i}\log p_\theta(m)\frac{\partial}{\partial\theta_j}\log p_\theta(m)\right]$$
and the Fisher information from the prior is
$$J(\mu) = \sum_{i=1}^d \int \frac{\mu_i'(\theta_i)^2}{\mu_i(\theta_i)}d\theta_i  \; .$$
Assuming that $\Theta = [-B,B]^d$, the prior $\mu$ can be chosen to minimize $J(\mu)$ \citep{Tsybakov2008,borovkov}. This observation along with the independence of the $M_i$, and upper bounding the average risk by the maximum risk, leads to
\begin{align} \label{eq:vt2}
\sup_{\theta\in\Theta} \mathbb{E}_\theta\|\hat\theta-\theta\|_2^2 \geq \frac{d^2}{n\mathsf{Tr}(I_{M}(\theta)) + \frac{d\pi^2}{B^2}} \; .
\end{align}

Proving Theorems \ref{thm:subg}-\ref{thm:fsm} therefore boils down to understanding and upper bounding $\mathsf{Tr}(I_M(\theta))$, i.e. the Fisher information from a single sample. In light of this, we have the following lemmas:
\begin{lem} \label{lem1}
If for any $\theta\in\Theta$, $S_\theta(X,Y)$ is a $\sigma^2$-sub-Gaussian random vector, then
$$\mathsf{Tr}(I_M(\theta)) \leq \min\{ \mathsf{Tr}(I_{X,Y}(\theta)), \; Ck\sigma^2\}, $$
for an absolute constant $C$.
\end{lem}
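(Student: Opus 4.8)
The statement bundles two inequalities, and I would prove them separately. The starting point for both is the identity $\mathsf{Tr}(I_M(\theta)) = \mathbb{E}_\theta\|\nabla_\theta\log p_\theta(M)\|_2^2$, together with the observation that because the encoder $b_i$ acts through a Markov kernel $Q(m\mid x,y)$ that does not depend on $\theta$, the score of the message is a conditional expectation of the score of the raw sample:
$$\nabla_\theta\log p_\theta(m) = \frac{\int Q(m\mid x,y)\,S_\theta(x,y)\,p_\theta(x,y)\,d\nu}{p_\theta(m)} = \mathbb{E}_\theta\big[S_\theta(X,Y)\mid M=m\big].$$
This representation is the only place the structure of the encoding enters, and everything afterward is a statement about conditional expectations of a sub-Gaussian vector.

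For the first bound $\mathsf{Tr}(I_M(\theta))\le\mathsf{Tr}(I_{X,Y}(\theta))$, I would simply invoke the conditional Jensen inequality: since $x\mapsto\|x\|_2^2$ is convex,
$$\mathsf{Tr}(I_M(\theta)) = \mathbb{E}_\theta\big\|\mathbb{E}_\theta[S_\theta\mid M]\big\|_2^2 \le \mathbb{E}_\theta\,\mathbb{E}_\theta\big[\|S_\theta\|_2^2\mid M\big] = \mathbb{E}_\theta\|S_\theta\|_2^2 = \mathsf{Tr}(I_{X,Y}(\theta)).$$
This is the data-processing inequality for Fisher information and uses none of the sub-Gaussian hypothesis.

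The substance is the bound $\mathsf{Tr}(I_M(\theta))\le Ck\sigma^2$, which exploits both that $M$ carries only $k$ bits and that $S_\theta$ is sub-Gaussian. Writing the trace as $\sum_m p_\theta(m)\|\mathbb{E}_\theta[S_\theta\mid M=m]\|_2^2$, for each message value $m$ I would let $v_m$ be the unit vector in the direction of $\mathbb{E}_\theta[S_\theta\mid M=m]$, so that $\|\mathbb{E}_\theta[S_\theta\mid M=m]\|_2 = \mathbb{E}_\theta[\langle v_m,S_\theta\rangle\mid M=m]$ and the scalar $\langle v_m,S_\theta\rangle$ is $\sigma^2$-sub-Gaussian by hypothesis. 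The key estimate is a truncation bound: for a $\sigma^2$-sub-Gaussian scalar $Z$ and an event $A$ of probability $p$, splitting $\mathbb{E}[|Z|\mathbbm{1}_A]=\int_0^\infty\mathbb{P}(A,\,|Z|>t)\,dt$ at the level where the uniform bound $\mathbb{P}(A)=p$ crosses the tail bound $\mathbb{P}(|Z|>t)\le 2e^{-t^2/\sigma^2}$ (immediate from the super-exponential definition via Markov, giving crossover $t^\ast=\sigma\sqrt{\log(2/p)}$) yields $\mathbb{E}[|Z|\mathbbm{1}_A]\le C'p\sigma\sqrt{\log(e/p)}$, and hence $p\,(\mathbb{E}[Z\mid A])^2\le C\sigma^2 p\log(e/p)$.

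Applying this with $Z=\langle v_m,S_\theta\rangle$ and $A=\{M=m\}$ and summing over $m$ gives
$$\mathsf{Tr}(I_M(\theta)) \le C\sigma^2\sum_m p_\theta(m)\log\!\frac{e}{p_\theta(m)} = C\sigma^2\big(1+H(M)\big),$$
where $H(M)$ is the entropy of $M$ in nats; since $M$ takes at most $2^k$ values, $H(M)\le k\log 2$, which delivers $\mathsf{Tr}(I_M(\theta))\le Ck\sigma^2$ after absorbing constants (using $k\ge 1$). I expect the main obstacle to be the truncation estimate: obtaining the $\sqrt{\log(e/p)}$ dependence rather than a cruder bound is exactly what converts the $2^k$-fold sum into the entropy $H(M)$ and produces the linear-in-$k$ scaling, so the constant there must be tracked carefully, as must the boundary regime $p\to 1$ where $\log(e/p)\to 1$ and the tail integral is no longer negligible relative to the bulk term.
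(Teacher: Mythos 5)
Your proof is correct and follows the same skeleton as the paper's: both rest on the identity $\mathsf{Tr}(I_M(\theta))=\sum_m p_\theta(m)\|\mathbb{E}_\theta[S_\theta(X,Y)\mid m]\|_2^2$ (the paper cites this as Lemma 2 of its Allerton predecessor), both get the first term from data processing, and both reduce the second term to the per-message estimate $p_\theta(m)\|\mathbb{E}_\theta[S_\theta\mid m]\|_2^2\lesssim \sigma^2\,p_\theta(m)\log(2/p_\theta(m))$ followed by a sum over the $2^k$ messages. Where you differ is in how the two sub-steps are executed. For the per-message bound, the paper avoids truncation entirely: it writes $\|\mathbb{E}_\theta[S_\theta\mid m]\|_2=\frac{1}{p_\theta(m)}\mathbb{E}_\theta[\langle u,S_\theta\rangle q_m(X,Y)]$ with $q_m(x,y)=p(m\mid x,y)$, drops $q_m\le 1$ inside the super-exponential moment $\mathbb{E}[\exp(\langle u,S_\theta\rangle^2/\sigma^2)]\le 2$, and applies Jensen to $\exp(|\cdot|^2)$ conditionally on $m$, which yields $\|\mathbb{E}_\theta[S_\theta\mid m]\|_2\le\sigma\sqrt{\log(2/p_\theta(m))}$ exactly, with no constant to track and with the $p\to 1$ boundary case never arising; your layer-cake split at $t^\ast=\sigma\sqrt{\log(2/p)}$ gives the same conclusion up to a constant but, as you note, requires care with the tail integral (it is controlled since $\int_{t^\ast}^\infty 2e^{-t^2/\sigma^2}dt\le \sigma p/(2t^\ast/\sigma)\lesssim p\sigma$, which is dominated by $p\sigma\sqrt{\log(e/p)}$). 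For the summation, your identification of $\sum_m p\log(e/p)$ with $1+H(M)\le 1+k\log 2$ is arguably more transparent than the paper's device of passing to the upper concave envelope $\phi$ of $x\mapsto x\log(2/x)$ and applying Jensen over the uniform distribution on $2^k$ messages, but the two are the same maximum-entropy fact. The paper's version of both sub-steps has the advantage of handling the sub-Gaussian and sub-exponential cases (Lemmas \ref{lem1} and \ref{lem2}) uniformly via the exponent $p\in\{1,2\}$, which your truncation argument would need to be redone for; otherwise the approaches are equivalent.
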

\begin{lem} \label{lem2}
If for any $\theta\in\Theta$, $S_\theta(X,Y)$ is a sub-exponential random vector with parameter $\sigma$, then
$$\mathsf{Tr}(I_M(\theta)) \leq \min\{ \mathsf{Tr}(I_{X,Y}(\theta)), \; Ck^2\sigma^2\}, $$
for an absolute constant $C$.
\end{lem}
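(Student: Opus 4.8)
The plan is to reduce the lemma to a pointwise bound on the Fisher information carried by a single quantized message, exploiting the key observation (from the quantized Fisher information framework of \cite{allerton}) that the score of the message equals the conditional expectation of the sample score. Since the encoding kernel $K(m\mid x,y)=\mathbb{P}(M=m\mid X=x,Y=y)$ does not depend on $\theta$, we have $p_\theta(m)=\sum_{y}\int K(m\mid x,y)\,p_\theta(x,y)\,d\nu(x)$, and differentiating under the integral sign gives
$$\nabla_\theta\log p_\theta(m)=\mathbb{E}_\theta\!\left[S_\theta(X,Y)\mid M=m\right].$$
Hence, writing $v_m=\mathbb{E}_\theta[S_\theta(X,Y)\mid M=m]$ and $p(m)=\mathbb{P}_\theta(M=m)$,
$$\mathsf{Tr}(I_M(\theta))=\mathbb{E}_\theta\big\|\mathbb{E}_\theta[S_\theta(X,Y)\mid M]\big\|_2^2=\sum_m p(m)\,\|v_m\|_2^2.$$
The first term in the minimum, $\mathsf{Tr}(I_M(\theta))\le\mathsf{Tr}(I_{X,Y}(\theta))$, then follows immediately by applying the conditional Jensen inequality coordinatewise, so the substance of the lemma is the bound $\mathsf{Tr}(I_M(\theta))\le Ck^2\sigma^2$.

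Next I would bound $\|v_m\|_2$ pointwise in terms of the message probability $p(m)$. Fixing a unit vector $u$ and setting $Z=\langle u,S_\theta(X,Y)\rangle$, the sub-exponential assumption gives $\mathbb{E}[e^{|Z|/\sigma}]\le 2$ and hence the tail bound $\mathbb{P}(|Z|>s)\le 2e^{-s/\sigma}$. Writing $g_m(x,y)=K(m\mid x,y)\in[0,1]$ with $\mathbb{E}_\theta[g_m]=p(m)$, we have $\langle u,v_m\rangle=\mathbb{E}_\theta[Z\,g_m]/p(m)$, and truncating at a level $t>0$,
$$\mathbb{E}_\theta[Z\,g_m]\le t\,p(m)+\mathbb{E}_\theta\!\left[|Z|\,\mathbbm{1}\{|Z|>t\}\right]\le t\,p(m)+2(t+\sigma)e^{-t/\sigma},$$
where the last step integrates the sub-exponential tail. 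Choosing $t=\sigma\log\!\big(e/p(m)\big)$ balances the two contributions and yields $\langle u,v_m\rangle\le C_2\,\sigma\log\!\big(e/p(m)\big)$. Since this bound is uniform over unit vectors $u$, taking the supremum gives $\|v_m\|_2\le C_2\,\sigma\log\!\big(e/p(m)\big)$.

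Combining the two steps,
$$\mathsf{Tr}(I_M(\theta))\le C_2^2\,\sigma^2\sum_m p(m)\left(\log\frac{e}{p(m)}\right)^2,$$
and it remains to bound the surprisal sum using the fact that $M$ takes at most $2^k$ values. Splitting the messages according to whether $p(m)\ge 2^{-2k}$ — on which $\log(e/p(m))$ is at most a constant times $k$ — or $p(m)<2^{-2k}$ — of which there are at most $2^k$, contributing a vanishing amount since $x(\log(e/x))^2$ is increasing for small $x$ — shows $\sum_m p(m)(\log(e/p(m)))^2\le C_3 k^2$. This gives $\mathsf{Tr}(I_M(\theta))\le Ck^2\sigma^2$ and completes the argument.

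The main obstacle is the pointwise surprisal bound of the second step, and in particular its dependence on the tail. The sub-exponential tail $e^{-s/\sigma}$ forces the truncation level to scale like $\sigma\log(1/p(m))$, whereas the sub-Gaussian tail in Lemma \ref{lem1} permits $t$ of order $\sigma\sqrt{\log(1/p(m))}$; after squaring and averaging against $p(m)$, the former turns the entropy-type bound of order $k$ into a bound of order $k^2$, which is exactly the quadratic-in-$k$ phenomenon the lemma records. Some care is also needed to make the truncation estimate uniform over all directions $u$ (which is where the vector sub-exponential assumption enters) and to justify differentiating under the integral sign in the score identity.
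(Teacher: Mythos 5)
Your proposal is correct, and its skeleton is the same as the paper's: both reduce to the identity $\mathsf{Tr}(I_M(\theta))=\sum_m p_\theta(m)\|\mathbb{E}_\theta[S_\theta(X,Y)\mid m]\|_2^2$ (Lemma 2 of \cite{allerton}), both establish the surprisal bound $\|\mathbb{E}_\theta[S_\theta(X,Y)\mid m]\|_2\lesssim \sigma\log(c/p_\theta(m))$, and both then sum over the at most $2^k$ messages. The two intermediate steps are, however, executed differently. For the pointwise bound the paper applies Jensen's inequality directly to the super-exponential moment condition $\mathbb{E}[\exp(|Z|/\sigma)]\le 2$, passing through $p_\theta(m)\exp(|\mathbb{E}[Z\mid m]|/\sigma)\le 2$; you instead integrate the tail $\mathbb{P}(|Z|>s)\le 2e^{-s/\sigma}$ after truncating at $t=\sigma\log(e/p(m))$. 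Your route costs a worse constant and a short computation of $\mathbb{E}[|Z|\mathbbm{1}\{|Z|>t\}]$, but it is robust (it would work for any tail decay, not just ones with a clean exponential-moment formulation) and makes the sub-Gaussian-versus-sub-exponential dichotomy transparent, as you note. For the final sum $\sum_m p(m)(\log(e/p(m)))^2\le C k^2$ the paper uses the upper concave envelope of $x\mapsto x(\log(2/x))^2$ together with Jensen over the uniform distribution on $2^k$ messages, which gives the clean constant $4$; your split at $p(m)\ge 2^{-2k}$ versus $p(m)<2^{-2k}$ is more elementary and equally valid (the low-probability group contributes at most $2^{-k}(1+2k\log 2)^2=O(1)$ since $x(\log(e/x))^2$ is increasing on $(0,e^{-1})$). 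Both arguments are complete; there is no gap.
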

\begin{lem} \label{lem3}
If for any $\theta\in\Theta$ and any unit vector $u\in\mathbb{R}^d$, $$\mathsf{Var}(\langle u, S_\theta(X,Y)\rangle) \leq I_0 \; ,$$  then
$$\mathsf{Tr}(I_M(\theta)) \leq \min\{ \mathsf{Tr}(I_{X,Y}(\theta)), \; 2^kI_0\} \; .$$
\end{lem}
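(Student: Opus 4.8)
The statement has two parts: the data-processing bound $\mathsf{Tr}(I_M(\theta)) \le \mathsf{Tr}(I_{X,Y}(\theta))$ and the message-count bound $\mathsf{Tr}(I_M(\theta)) \le 2^k I_0$. The plan is to reduce both to a single representation of the score of $M$. Writing $b(m\mid x,y) = \Pr[M=m\mid X=x,Y=y]$ for the (possibly stochastic) encoder, which does not depend on $\theta$, I would first record the identity $\nabla_\theta p_\theta(m) = \mathbb{E}_\theta[b(m\mid X,Y)\,S_\theta(X,Y)]$, obtained by differentiating $p_\theta(m) = \mathbb{E}_\theta[b(m\mid X,Y)]$ under the integral and using $\nabla_\theta p_\theta(x,y) = p_\theta(x,y) S_\theta(x,y)$. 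Dividing by $p_\theta(m)$ shows that the score of the message is the conditional mean of the score of the data, $\nabla_\theta \log p_\theta(m) = \mathbb{E}_\theta[S_\theta(X,Y)\mid M=m]$, so that
$$\mathsf{Tr}(I_M(\theta)) = \sum_{m} p_\theta(m)\,\bigl\|\mathbb{E}_\theta[S_\theta(X,Y)\mid M=m]\bigr\|_2^2 = \sum_m \frac{\|\nabla_\theta p_\theta(m)\|_2^2}{p_\theta(m)} \; ,$$
the sum running over the at most $2^k$ values of $M$.

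The first (data-processing) bound then follows immediately from Jensen's inequality applied to the convex map $v\mapsto\|v\|_2^2$ inside each conditional expectation, giving $\sum_m p_\theta(m)\,\|\mathbb{E}_\theta[S_\theta\mid M=m]\|_2^2 \le \mathbb{E}_\theta\|S_\theta\|_2^2 = \mathsf{Tr}(I_{X,Y}(\theta))$.

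For the second bound---the real content---I would bound each summand separately. Let $\phi_m$ be the unit vector in the direction of $\nabla_\theta p_\theta(m)$ (the term vanishes when this gradient is zero, so nothing is lost). Then $\|\nabla_\theta p_\theta(m)\|_2 = \mathbb{E}_\theta[b(m\mid X,Y)\,\langle \phi_m, S_\theta\rangle]$, and applying the Cauchy--Schwarz inequality with the nonnegative weight $b(m\mid X,Y)$ gives $\|\nabla_\theta p_\theta(m)\|_2^2 \le p_\theta(m)\,\mathbb{E}_\theta[b(m\mid X,Y)\,\langle \phi_m,S_\theta\rangle^2]$. Dividing by $p_\theta(m)$ and using $0\le b(m\mid X,Y)\le 1$ yields
$$\frac{\|\nabla_\theta p_\theta(m)\|_2^2}{p_\theta(m)} \le \mathbb{E}_\theta[\langle \phi_m,S_\theta\rangle^2] \le I_0 \; ,$$
where the last step is the hypothesis applied to the unit vector $\phi_m$, together with the fact that the score is mean-zero ($\mathbb{E}_\theta[S_\theta]=0$), so its variance in direction $\phi_m$ equals the second moment $\mathbb{E}_\theta[\langle\phi_m,S_\theta\rangle^2]$. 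Summing over the at most $2^k$ messages gives $\mathsf{Tr}(I_M(\theta))\le 2^k I_0$, completing the proof.

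The step I expect to be the crux is the weighted Cauchy--Schwarz: it is exactly what lets the encoding probability $b(m\mid X,Y)$ be pulled out and bounded by $1$, converting the per-message contribution into a single directional second moment rather than the full $\mathbb{E}_\theta\|S_\theta\|_2^2$. This is what makes the explicit count of $2^k$ messages appear, and it is the mechanism responsible for the exponential-in-$k$ dependence in Theorem \ref{thm:fsm}. The only care needed is the standard differentiation-under-the-integral regularity for the score identity and the handling of messages with zero probability or zero gradient, which contribute nothing to the sum.
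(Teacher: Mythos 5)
Your proof is correct and follows essentially the same route as the paper: it reduces to the identity $\mathsf{Tr}(I_M(\theta)) = \sum_m p_\theta(m)\|\mathbb{E}_\theta[S_\theta(X,Y)\mid m]\|_2^2$ (which the paper cites from prior work and you rederive), then applies Cauchy--Schwarz with the encoding probability, bounds that probability by one, invokes the mean-zero property of the score to use the variance hypothesis, and sums over the $2^k$ messages. The only cosmetic differences are that you prove the data-processing half via conditional Jensen rather than citing it, and you phrase the Cauchy--Schwarz step in its weighted form; the substance is identical.
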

These three lemmas show how the Fisher information from a single sample $M$ can scale with the number of bits $k$. Depending on the tail behavior of the score function random vector $S_\theta(X,Y)$, the scaling can be at most linear, quadratic, or exponential in $k$. The proofs of these lemmas are included in the last section of the paper. Luckily, for a logistic model, the tail behavior of the score function random vector $S_\theta(X,Y)$ cannot be any worse than the corresponding tail behavior for the data $X$. This is shown in the following propositions. Theorems \ref{thm:subg}-\ref{thm:fsm} then follow by applying Lemmas \ref{lem1}-\ref{lem3} to equation \eqref{eq:vt2}, respectively, and taking $B\to\infty$.

\begin{prop}
Suppose that $X \sim P_X$ is a $\sigma^2$-sub-Gaussian random vector in $\mathbb{R}^d$. Then the score function $S_\theta(X,Y)$ is also a $\sigma^2$-sub-Gaussian random vector in $\mathbb{R}^d$.
\end{prop}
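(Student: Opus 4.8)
The plan is to exploit the fact that the logistic score is a ``damped'' copy of the data vector $X$: every one-dimensional projection of $S_\theta(X,Y)$ is pointwise dominated in absolute value by the corresponding projection of $X$, so the sub-Gaussian moment generating function transfers directly from $X$ to $S_\theta$.

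First I would compute the score explicitly. Since $p_\theta(x,y) = f(x)p_\theta(y|x)$ and $f(x)$ does not depend on $\theta$, the $\log f(x)$ term drops and $S_\theta(x,y) = \nabla_\theta \log p_\theta(y|x)$. Differentiating $\log p_\theta(y|x) = -\log(1+\exp(-y\langle\theta,x\rangle))$ gives
$$S_\theta(x,y) = \frac{yx}{1+\exp(y\langle\theta,x\rangle)} = yx\,g(y\langle\theta,x\rangle),$$
where $g(t) = (1+e^t)^{-1} \in (0,1)$ is the logistic damping factor. The key step is then the pointwise domination: for any unit vector $u\in\mathbb{R}^d$,
$$|\langle u, S_\theta(x,y)\rangle| = |y|\,g(y\langle\theta,x\rangle)\,|\langle u,x\rangle| \leq |\langle u,x\rangle|,$$
since $|y|=1$ and $0<g<1$, and this holds for every $(x,y)$ and every $\theta$. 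Because $t\mapsto \exp(t^2/\sigma^2)$ is increasing in $|t|$, this yields the pointwise inequality $\exp(\langle u,S_\theta(x,y)\rangle^2/\sigma^2) \leq \exp(\langle u,x\rangle^2/\sigma^2)$.

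The only mildly subtle point is that the sub-Gaussian moment condition for $S_\theta$ must be checked under the joint law $p_\theta(x,y)$, which depends on $\theta$, whereas the hypothesis is phrased for the marginal $P_X$. I would handle this by applying the pointwise bound \emph{before} integrating and then summing out $y$: since $\sum_{y\in\{-1,1\}}p_\theta(y|x)=1$, the conditional expectation over $Y$ collapses, giving
$$\mathbb{E}_\theta\!\left[\exp\frac{\langle u,S_\theta(X,Y)\rangle^2}{\sigma^2}\right] \leq \int f(x)\exp\frac{\langle u,x\rangle^2}{\sigma^2}\,d\nu(x) = \mathbb{E}_X\!\left[\exp\frac{\langle u,X\rangle^2}{\sigma^2}\right] \leq 2,$$
where the final inequality is exactly the assumed $\sigma^2$-sub-Gaussianity of $X$. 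As $u$ was an arbitrary unit vector, $S_\theta(X,Y)$ is $\sigma^2$-sub-Gaussian for every $\theta\in\Theta$, which is the claim. There is no genuine obstacle here—the argument is a clean domination—so the only real work is getting the score formula right and ensuring the $Y$-integration collapses as described.
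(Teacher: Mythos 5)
Your proof is correct and rests on the same key observation as the paper's: the logistic prefactor has magnitude at most one, so every projection of the score is pointwise dominated by the corresponding projection of $X$, and your handling of the $\theta$-dependent joint law (bounding pointwise, then summing out $y$) is exactly right. The only difference is the final step --- the paper passes to tail probabilities and picks up an absolute constant $K$ in the exponent, whereas you verify the super-exponential moment condition $\mathbb{E}\left[\exp\left(\langle u,S_\theta(X,Y)\rangle^2/\sigma^2\right)\right]\leq 2$ directly, which actually matches the paper's stated definition of $\sigma^2$-sub-Gaussian exactly and with no loss in the parameter.
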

\begin{proof}
The joint distribution of $X$ and $Y$ can be written as the product
\begin{align*}
p_\theta(x,y) & = f(x)\frac{1}{1+\exp(-y\langle\theta,x\rangle)}
\end{align*}
so that the score for component $\theta_i$ is
\begin{align*}
S_{\theta_i}(x,y)  = \frac{\partial}{\partial\theta_i} \log p_\theta(x,y)  = \frac{\exp(-y\langle\theta,x\rangle)}{1+\exp(-y\langle\theta,x\rangle)}yx_i \; .
\end{align*}
Projecting the score function vector $S_\theta(x,y)$ onto any unit vector $u$ then gives
\begin{align} \label{eq:prop1}
\langle u,S_\theta(x,y) \rangle & = \sum_{i=1}^d \frac{\exp(-y\langle\theta,x\rangle)}{1+\exp(-y\langle\theta,x\rangle)}yx_iu_i \nonumber \\
& = y\left(\frac{\exp(-y\langle\theta,x\rangle)}{1+\exp(-y\langle\theta,x\rangle)}\right)\langle u,x\rangle \; .
\end{align}
Note that the prefactor in \eqref{eq:prop1} has magnitude less than or equal to one, i.e.
$$\left|y\left(\frac{\exp(-y\langle\theta,x\rangle)}{1+\exp(-y\langle\theta,x\rangle)}\right)\right| \leq 1 \; ,	$$
so that
\begin{align} \label{eq:prop1_2}
\mathbb{P}\left(|\langle u,S_\theta(X,Y)\rangle|\geq t\right) & \leq \mathbb{P}\left(|\langle u,X\rangle|\geq t\right) \nonumber \\
& \leq 2\exp\left(\frac{-t^2}{K\sigma^2}\right)
\end{align}
where $K$ is an absolute constant and \eqref{eq:prop1_2} follows from the sub-Gaussianity of $X$.
\end{proof}
Using nearly identical proofs, we get similar results for sub-exponential and finite second-moment random vectors $X$:
\begin{prop}
Suppose that $X \sim P_X$ is a sub-exponential random vector in $\mathbb{R}^d$ with parameter $\sigma$. Then the score function $S_\theta(X,Y)$ is also a sub-exponential random vector in $\mathbb{R}^d$ with parameter $\sigma$.
\end{prop}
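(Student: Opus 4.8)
The plan is to mirror the proof of the preceding sub-Gaussian proposition almost verbatim, replacing the sub-Gaussian tail bound at the final step with the defining sub-exponential moment condition. The score computation is identical: writing $p_\theta(x,y) = f(x)\frac{1}{1+\exp(-y\langle\theta,x\rangle)}$, the $i$-th component of the score is
$$S_{\theta_i}(x,y) = \frac{\exp(-y\langle\theta,x\rangle)}{1+\exp(-y\langle\theta,x\rangle)}yx_i,$$
so projecting onto an arbitrary unit vector $u$ yields
$$\langle u,S_\theta(x,y)\rangle = y\left(\frac{\exp(-y\langle\theta,x\rangle)}{1+\exp(-y\langle\theta,x\rangle)}\right)\langle u,x\rangle,$$
exactly as in \eqref{eq:prop1}.

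The key observation, again, is that the logistic prefactor is a contraction: since $y\in\{-1,1\}$ and $0 \le \frac{\exp(-y\langle\theta,x\rangle)}{1+\exp(-y\langle\theta,x\rangle)} \le 1$, we have
$$\left|y\left(\frac{\exp(-y\langle\theta,x\rangle)}{1+\exp(-y\langle\theta,x\rangle)}\right)\right| \le 1,$$
so the pointwise domination $|\langle u,S_\theta(x,y)\rangle| \le |\langle u,x\rangle|$ holds for every $(x,y)$. Rather than pass through tail probabilities, I would exploit this domination directly at the level of the moment condition that defines sub-exponentiality. Because $t\mapsto \exp(t/\sigma)$ is increasing for $t\ge 0$, the pointwise bound gives, almost surely,
$$\exp\left(\frac{|\langle u,S_\theta(X,Y)\rangle|}{\sigma}\right) \le \exp\left(\frac{|\langle u,X\rangle|}{\sigma}\right),$$
and taking expectations yields
$$\mathbb{E}\left[\exp\left(\frac{|\langle u,S_\theta(X,Y)\rangle|}{\sigma}\right)\right] \le \mathbb{E}\left[\exp\left(\frac{|\langle u,X\rangle|}{\sigma}\right)\right] \le 2,$$
where the last inequality is precisely the sub-exponentiality of $X$ with parameter $\sigma$ applied to $u$. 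Since $u$ is arbitrary, every one-dimensional projection of $S_\theta(X,Y)$ is sub-exponential with parameter $\sigma$, which is the definition of $S_\theta(X,Y)$ being a sub-exponential random vector with parameter $\sigma$.

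There is essentially no obstacle to overcome: the whole argument rests on the contractivity of the logistic prefactor, which produces an almost-sure domination of the score's projection by the data's projection. The only mild point worth flagging is that the ``super-exponential'' definition of sub-exponentiality adopted in the preliminaries—stated through the single moment $\mathbb{E}[\exp(|X|/\sigma)]\le 2$—is exactly the kind of condition that is monotone under pointwise domination, so no constant is lost and the parameter $\sigma$ is preserved verbatim (a tail-based argument routed through \eqref{eq:prop1_2}-style probabilities would instead risk incurring an absolute constant). The identical structure makes clear that the same reasoning covers the finite second-moment case, where one bounds $\mathsf{Var}(\langle u,S_\theta(X,Y)\rangle) \le \mathbb{E}[\langle u,S_\theta(X,Y)\rangle^2] \le \mathbb{E}[\langle u,X\rangle^2] \le I_0$ using the same domination.
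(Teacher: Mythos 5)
Your proof is correct and rests on the same key observation as the paper's argument for the sub-Gaussian case (which the paper says carries over "nearly identically"): the logistic prefactor has magnitude at most one, so $|\langle u,S_\theta(x,y)\rangle|\le|\langle u,x\rangle|$ pointwise. Your variant of concluding directly from the monotonicity of $t\mapsto\exp(t/\sigma)$ applied to the defining moment condition, rather than passing through tail probabilities as in \eqref{eq:prop1_2}, is if anything slightly cleaner, since it preserves the parameter $\sigma$ exactly as the proposition claims.
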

\begin{prop}
Suppose that $X \sim P_X$ in $\mathbb{R}^d$ is such that $\mathbb{E}[\langle u,X\rangle^2] \leq I_0$ for any unit vector $u\in\mathbb{R}^d$. Then the score function $S_\theta(X,Y)$ also satisfies $\mathbb{E}[\langle u,S_\theta(X,Y)\rangle^2] = \text{var}(\langle u,S_\theta(X,Y)\rangle) \leq I_0$ for any unit vector $u\in\mathbb{R}^d$.
\end{prop}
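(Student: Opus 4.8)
The plan is to mirror the argument already used for the sub-Gaussian proposition, since the structure of the score function is identical and only the final moment bound differs. First I would invoke the explicit form of the projected score computed in the proof of Proposition 1, namely
$$\langle u, S_\theta(x,y)\rangle = y\left(\frac{\exp(-y\langle\theta,x\rangle)}{1+\exp(-y\langle\theta,x\rangle)}\right)\langle u, x\rangle,$$
together with the observation, also established there, that the scalar prefactor has magnitude at most one:
$$\left|y\left(\frac{\exp(-y\langle\theta,x\rangle)}{1+\exp(-y\langle\theta,x\rangle)}\right)\right| \leq 1.$$

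Squaring the projected score and applying this bound gives the pointwise inequality $\langle u, S_\theta(x,y)\rangle^2 \leq \langle u, x\rangle^2$, valid for every $(x,y)$. Taking expectations over $(X,Y)\sim p_\theta$ and using the hypothesis on the second moment of the projections of $X$ then yields
$$\mathbb{E}[\langle u, S_\theta(X,Y)\rangle^2] \leq \mathbb{E}[\langle u, X\rangle^2] \leq I_0.$$
This establishes the bound on the second moment of the projected score.

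It remains to justify the claimed equality between the second moment and the variance. For this I would appeal to the standard zero-mean property of the score under the model, namely $\mathbb{E}_\theta[S_\theta(X,Y)] = 0$, which follows from differentiating the identity $\int p_\theta(x,y)\, d(x,y) = 1$ under the integral sign. Linearity then gives $\mathbb{E}[\langle u, S_\theta(X,Y)\rangle] = \langle u, \mathbb{E}_\theta[S_\theta(X,Y)]\rangle = 0$, so that $\text{var}(\langle u, S_\theta(X,Y)\rangle) = \mathbb{E}[\langle u, S_\theta(X,Y)\rangle^2]$, completing the proof.

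I do not expect any genuine obstacle here, as the argument is essentially a repetition of the sub-Gaussian case with the tail bound replaced by a second-moment bound. The only point deserving a word of care is the zero-mean property of the score, which relies on the usual regularity conditions permitting interchange of differentiation and integration; these hold since the logistic likelihood is smooth in $\theta$ with a dominating integrable derivative.
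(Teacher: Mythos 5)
Your proposal is correct and matches the paper's intended argument: the paper explicitly says these propositions follow by ``nearly identical proofs'' to the sub-Gaussian case, which is exactly what you do --- bound the scalar prefactor in $\langle u, S_\theta(x,y)\rangle = y\bigl(\tfrac{\exp(-y\langle\theta,x\rangle)}{1+\exp(-y\langle\theta,x\rangle)}\bigr)\langle u,x\rangle$ by one and compare second moments. Your added care about the zero-mean property of the score is consistent with how the paper uses that fact elsewhere (in the proof of Lemma 3).
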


\begin{rem}
In the sub-Gaussian case, when the Fisher information is at most linear with $k$, we can easily extend our lower bounds to the case where an average of $k$ bits per sample is sent to the centralized estimator, i.e. $nk$ bits in total, but we do not require a strict $k$-bit budget for each sample. This will not change the resulting upper bound on $I_{M_1,\ldots,M_n}(\theta)$.
\end{rem}
\begin{rem}
In another variation, we could consider a ``batch'' communication budget where there are $m$ machines each with a batch of $l$ independent samples, and each machine can communicate $k$ bits to the centralized estimator. In this case the score functions combine linearly, i.e. $S_\theta(X_1,X_2,Y_1,Y_2) = S_\theta(X_1,Y_1) + S_\theta(X_2,Y_2)$, and we will arrive at the same lower bounds (in terms of $n,d,k,\sigma$) if we set $n=lm$. This is interesting in that it accounts for potential computations that a machine could perform jointly on multiple samples.
\end{rem}

\section{Logistic Loss (Proof of Corollary \ref{cor:subg})} \label{sec:convexity}
In this section we will use a convexity argument similar to that from \cite{duchi2019lower} in order to show how Corollary \ref{cor:subg} follows in a similar way to Theorem \ref{thm:subg}. The goal will be to lower bound the excess risk associated with the logistic loss by a quadratic in a small neighborhood of $\hat\theta$ values, so that we can use the same Fisher information argument from Section \ref{sec:FI}.

We will restrict our attention to a subset of $\theta$ values and only consider $\theta \in \left[-\frac{r}{4\sqrt{d}},\frac{r}{4\sqrt{d}}\right]^d$ for an $r>0$ to be specified later. Since we take a supremum over $\theta$ values this cannot increase the minimax risk. If $R_\theta(\hat\theta)$ were $\lambda$-strongly convex as a function of $\hat\theta$ for $\|\hat\theta\|_2 \leq r$, then
\begin{align*}
R_\theta(\hat\theta) - R_\theta(\theta_0) \geq \nabla R_\theta(\theta_0)(\hat\theta-\theta_0) + \frac{\lambda}{2}\|\hat\theta -\theta_0\|_2^2
\end{align*}
for any $\|\hat\theta\|_2\leq r$ and $\|\theta_0\|_2\leq r$. Picking $\theta_0 = \theta$ note that $\nabla R_\theta(\theta) = 0$ and we would get
\begin{align*}
R_\theta(\hat\theta) - \inf_{\theta_0}R_\theta(\theta_0) \geq \frac{\lambda}{2}\|\hat\theta -\theta\|_2^2
\end{align*}
for any $\|\hat\theta\|_2\leq r$. Furthermore, since $R_\theta(\hat\theta)$ is convex for all $\hat\theta$, we would have
\begin{align} \label{eq:cvx}
R_\theta(\hat\theta) - \inf_{\theta_0}R_\theta(\theta_0) \geq \begin{cases}\frac{\lambda}{2}\|\hat\theta -\theta\|_2^2 & \; \text{if} \; \|\hat\theta-\theta\|_2 \leq r/2 \\ \lambda \frac{r}{2} \|\hat\theta-\theta\|_2 - \frac{\lambda r^2}{8} & \; \text{if} \; \|\hat\theta - \theta\|_2 \geq r/2\end{cases}
\end{align}
for all $\hat\theta$. Because $\|\theta\|_2\leq r/4$, the right-hand side of \eqref{eq:cvx} cannot increase when projecting $\hat\theta$ onto the ball $\|\hat\theta\|_2\leq r/4$, so we will without loss of generality assume that $\|\hat\theta\|_2\leq r/4$ and thus $\|\hat\theta-\theta\|_2 \leq r/2$ and
$$R_\theta(\hat\theta) - \inf_{\theta_0}R_\theta(\theta_0) \geq \frac{\lambda}{2}\|\hat\theta -\theta\|_2^2 \; .$$
Our conclusion is that if $R_\theta(\hat\theta)$ is $\lambda$-strongly convex for $\|\hat\theta\|_2\leq r$, then we can lower bound the excess logistic risk by a quadratic as desired. We therefore need to show that the minimum eigenvalue of the Hessian $\nabla^2 R_{\theta}(\hat\theta)$ is bounded away from zero for $\|\hat\theta\|_2\leq r$.

Using Lebesgue's dominated convergence theorem,

\begin{align*}
\nabla^2 R_{\theta}(\hat\theta) & = \mathbb{E}_{\theta}[\nabla^2 \log(1+\exp(-Y\langle\hat\theta,X\rangle)] \\
& = \mathbb{E}_{\theta}\left[\frac{1}{1+\exp(-Y\langle\hat\theta,X\rangle)}\left(1-\frac{1}{1+\exp(-Y\langle\hat\theta,X\rangle)}\right)XX^T\right] \; .
\end{align*}

Since we are assuming that $\langle u,X\rangle$ is $\sigma^2$-sub-Gaussian for any unit vector $u\in\mathbb{R}^d$, we have in particular that $\langle \frac{\hat\theta}{\|\hat\theta\|},X\rangle$ is $\sigma^2$-sub-Gaussian. Therefore $\langle \hat\theta,X\rangle$ is $\|\hat\theta\|^2\sigma^2$-sub-Gaussian. By sub-Gaussian concentration, if $\|\hat\theta\|\leq r$ then
\begin{align*}
|\langle \hat\theta,X\rangle| \leq \sqrt{2r^2\sigma^2\log\frac{2}{\alpha}}
\end{align*}
with probability at least $1-\alpha$. Let $t=\sqrt{2r^2\sigma^2\log\frac{2}{\alpha}}$. For $0<\epsilon<1$, let $r$ be small enough such that $\frac{e^{t}}{(1+e^t)^2} \geq \frac{1-\epsilon}{4}$,
and therefore
\begin{align}
p_{\hat\theta}(Y|X)(1-p_{\hat\theta}(Y|X)) & =\frac{1}{1+\exp(-Y\langle\hat\theta,X\rangle)}\left(1-\frac{1}{1+\exp(-Y\langle\hat\theta,X\rangle)}\right) \nonumber\\
& \geq  \frac{1-\epsilon}{4} \label{eq:conc_ineq}
\end{align}
with probability at least $1-\alpha$. Then
\begin{align*}
\lambda_\text{min}\left(\mathbb{E}_\theta\left[p_{\hat\theta}(Y|X)(1-p_{\hat\theta}(Y|X))XX^T\right]\right) & \geq \lambda_\text{min}\left(\frac{1-\frac{\epsilon}{2}}{4}\mathbb{E}[XX^T]\right) + \lambda_\text{min}(A)
\end{align*}
where the matrix $A$ is the difference
\begin{align*}
A & = \mathbb{E}_\theta\left[\left(p_{\hat\theta}(Y|X)(1-p_{\hat\theta}(Y|X)) - \frac{1-\epsilon}{4}\right)XX^T\right] \; .
\end{align*}
The row-sums of the magnitudes of the entries of the $A$ matrix can be bounded (as we will show below) by
\begin{align} \label{eq:A_entry}
\sum_{j=1}^d|A_{ij}| \leq d\epsilon \sigma^2 + d\alpha\sigma^2\log\frac{2}{\alpha} \; .
\end{align}
Therefore $\lambda_\text{min}(A) \geq - d\epsilon\sigma^2 - d\alpha\sigma^2\log\frac{2}{\alpha}$, and
\begin{align*}
\lambda_\text{min}\left(\nabla^2 R_{\theta}(\hat\theta)\right) & \geq \frac{1-\epsilon}{4}\lambda_\text{min}(\mathbb{E}[XX^T]) - d\epsilon\sigma^2 - d\alpha\sigma^2\log\frac{2}{\alpha}.
\end{align*}
Finally, if we assume that $\lambda_\text{min}(\mathbb{E}[XX^T]) \geq \delta > 0$ and set $\epsilon,\alpha,r$ small enough then $\lambda_\text{min}\left(\nabla^2 R_{\theta}(\hat\theta)\right) \geq c_0\delta$ as needed. In particular we could set $\epsilon= \Theta(\sigma^{-2}d^{-1})$, $\alpha = \Theta(\sigma^{-3}d^{-3/2})$ and $r = \Theta(d^{-1}\sigma^{-3}[\log(d\sigma)]^{-1/2}) \; .$
In the lower bound proofs in Section \ref{sec:FI} we can therefore set $B= \Theta\left(d^{-3/2}\sigma^{-3}[\log(d\sigma)]^{-1/2}\right)$ and the corollary follows.

\subsection{Proof of Equation \eqref{eq:A_entry}}
Starting from the left-hand side of \eqref{eq:A_entry},
\begin{align*}
\sum_{j=1}^d|A_{ij}|  & = \sum_{j=1}^d \left| \mathbb{E}_\theta\left[\left(p_{\hat\theta}(Y|X)(1-p_{\hat\theta}(Y|X)) - \frac{1-\epsilon}{4}\right)X_iX_j\right] \right|\\
& \leq  \sum_{j=1}^d  \mathbb{E}_\theta\left[ \left|p_{\hat\theta}(Y|X)(1-p_{\hat\theta}(Y|X)) - \frac{1-\epsilon}{4}\right| |X_i|\left|X_j\right|\right] \\
& = \sum_{j=1}^d\int_{\mathbb{R}^d} \left|p_{\hat\theta}(1|x)(1-p_{\hat\theta}(1|x)) - \frac{1-\epsilon}{4}\right|  |x_i| |x_j| \, f(x)d\nu(x) \; .
\end{align*}
We now split the integral into two parts; one integral over the set
$$E = \left\{x\in\mathbb{R}^d \, : \, p_{\hat\theta}(1|x)(1-p_{\hat\theta}(1|x)) \geq \frac{1-\epsilon}{4}\right\}$$
and one over it's complement $E^C = \mathbb{R}^d\setminus E$. We have
\begin{align}
\sum_{j=1}^d|A_{ij}|  = & \sum_{j=1}^d \int_E \left|p_{\hat\theta}(1|x)(1-p_{\hat\theta}(1|x)) - \frac{1-\epsilon}{4}\right| |x_i||x_j| \, f(x)d\nu(x) \label{eq:term1}
\\
& + \sum_{j=1}^d \int_{E^C} \left|p_{\hat\theta}(1|x)(1-p_{\hat\theta}(1|x)) - \frac{1-\epsilon}{4}\right| |x_i||x_j| \, f(x)d\nu(x) \label{eq:term2}
\end{align}
The term \eqref{eq:term1} is upper bounded by $d\epsilon\sigma^2$ by the definition of $E$ and a bound on the second moment due to sub-Gaussianity. For \eqref{eq:term2} we note that
$$ \int_{E^C} |x_i||x_j| \, f(x)d\nu(x) \leq \max\left\{\int_{E^C} x_i^2\, f(x)d\nu(x),\int_{E^C} x_j^2 \, f(x)d\nu(x)\right\}$$
and each term on the right-hand side can be bounded by the super-exponential property of sub-Gaussian distributions \citep{versh}:
\begin{align*}
2 \geq \mathbb{E}\left[\exp\frac{X_i^2}{\sigma^2}\right] & \geq \mathbb{E}\left[1_{E^C}(X)\exp\frac{X_i^2}{\sigma^2}\right] =  \mathbb{E}[1_{E^C}(X)]\mathbb{E}\left[\exp\frac{X_i^2}{\sigma^2}\bigg|E^C\right] \\
& \geq \mathbb{E}[1_{E^C}(X)]\exp\left(\mathbb{E}\left[\frac{X_i^2}{\sigma^2}\bigg| E^C\right]\right) \\
& = \mathbb{E}[1_{E^C}(X)]\exp\left(\mathbb{E}\left[\frac{X_i^2}{\sigma^2}1_{E^C}(X)\right]\bigg/ \mathbb{E}[1_{E^C}(X)]\right)
\end{align*}
and thus
$$\mathbb{E}\left[X_i^2 1_{E^C}(X)\right] \leq \sigma^2\mathbb{E}[1_{E^C}(X)]\log\frac{2}{\mathbb{E}[1_{E^C}(X)]} \leq \sigma^2 \alpha\log\frac{2}{\alpha} \; .$$

\section{Proofs of Lemmas \ref{lem1}-\ref{lem3}}

In this section we focus on proving Lemmas \ref{lem1}-\ref{lem3}. The first term in the max term of each lemma is a simple consequence of the data processing inequality for Fisher information \citep{zamir}, so we will focus on the other terms. Using Lemma 2 from \cite{allerton}, we have
\begin{align}\label{eq:geom}
\mathsf{Tr}(I_M(\theta)) = \sum_{i=1}^d [I_M(\theta)]_{i,i} =\sum_m p_\theta(m)\|\mathbb{E}_\theta[S_\theta(X,Y)|m]\|_2^2 \; .
\end{align}

\subsection{Proof of Lemmas \ref{lem1} -\ref{lem2}}
Let
$$u = \frac{1}{\| \mathbb{E}[S_\theta(X,Y)|m] \|_2} \mathbb{E}_\theta[S_\theta(X,Y)|m]$$
and note that
$$\| \mathbb{E}_\theta[S_\theta(X,Y)|m]\|_2 = \frac{1}{p_\theta(m)}\mathbb{E}_\theta[\langle u_,S_\theta(X,Y)\rangle q_m(X,Y)]$$
where $q_m(x,y) = p(m|x,y)$ is the probability that a sample $(x,y)$ is encoded into message $m$ via the possibly stochastic quantization map.

In the sub-Gaussian case (Lemma \ref{lem1}) let $p=2$, and in the sub-exponential case (Lemma \ref{lem2}) let $p=1$. The tail bound on $\langle u,S_\theta(X,Y)\rangle$ and the convexity of $x\mapsto \exp(|x|^p)$ gives
\begin{align*}
2 & \geq \mathbb{E}_\theta[\exp(|\langle u,S_\theta(X,Y) \rangle|^p/\sigma^{p}] \\
& \geq \mathbb{E}_\theta[q_m(X,Y)\exp(|\langle u,S_\theta(X,Y) \rangle|^p/\sigma^{p}] \\
& = p_\theta(m)\mathbb{E}_\theta[\exp(|\langle u,S_\theta(X,Y) \rangle|^p/\sigma^{p}|m] \\
& \geq p_\theta(m)\exp\left( |\mathbb{E}_\theta[\langle u,S_\theta(X,Y) \rangle | m]|^p / \sigma^p\right)
\end{align*}
so that
$$\mathbb{E}_\theta[\langle u,S_\theta(X,Y) \rangle|m] \leq \sigma\left(\log\left(\frac{2}{p_\theta(m)}\right)\right)^\frac{1}{p} \; .$$
Therefore,
\begin{equation} \label{eq:tail_bound}
\|\mathbb{E}_\theta[S_\theta(X,Y)|m]\|_2\leq \sigma\left(\log\left(\frac{2}{p_\theta(m)}\right)\right)^\frac{1}{p} \; .
\end{equation}
Combining \eqref{eq:geom} and \eqref{eq:tail_bound}, 
$$\mathsf{Tr}(I_M(\theta)) \leq \sigma^2 \sum_m p_\theta(m) \left(\log\left(\frac{2}{p_\theta(m)}\right)\right)^\frac{2}{p} \; .$$
To bound this expression, let $\phi$ be the upper concave envelope of $x\mapsto x\left(\log\frac{2}{x}\right)^\frac{2}{p}$ on $[0,1]$. We have
\begin{align}
\mathsf{Tr}(I_M(\theta)) & \leq \sigma^22^k\sum_m \frac{1}{2^k}\phi(p_\theta(m)) \nonumber \\
& \leq \sigma^2 2^k \phi\left(\sum_m \frac{1}{2^k}p_\theta(m)\right) \nonumber \\
& = \sigma^2 2^k \phi\left(\frac{1}{2^k}\right) \; .
\end{align}
It can be easily checked that $\phi(x) = x\left(\log\frac{2}{x}\right)^\frac{2}{p}$ for $0<x\leq1/2$, and therefore
\begin{align*}
\mathsf{Tr}(I_M(\theta)) & \leq \sigma^2\left(\log 2^{k+1}\right)^\frac{2}{p} \\
& \leq \sigma^2(k+1)^\frac{2}{p} \\
& \leq 4\sigma^2k^\frac{2}{p} \; .
\end{align*}

\subsection{Proof of Lemma \ref{lem3}}
Using the Cauchy-Schwarz inequality,
\begin{align*}
p_\theta(m)\| \mathbb{E}_\theta[S_\theta(X)|m]\|_2^2 & = \frac{1}{p_\theta(m)}\mathbb{E}_\theta[\langle u,S_\theta(X,Y)\rangle q_m(X,Y)]^2 \\
& \leq \frac{1}{p_\theta(m)}\mathbb{E}_\theta[\langle u,S_\theta(X,Y)\rangle^2]\mathbb{E}_\theta[q_m(X,Y)^2] \\ 
& \leq \frac{1}{p_\theta(m)}\mathbb{E}_\theta[\langle u,S_\theta(X,Y)\rangle^2]\mathbb{E}_\theta[q_m(X,Y)] \\ 
& = \mathbb{E}_\theta[\langle u,S_\theta(X,Y)\rangle^2] \; .
\end{align*}
So if $\mathsf{Var}\langle u,S_\theta(X,Y) \rangle \leq I_0$, then because score functions have zero mean,
$$p_\theta(m)\| \mathbb{E}_\theta[S_\theta(X,Y)|m]\|_2^2 \leq I_0 \; .$$
Therefore by \eqref{eq:geom},
$$\mathsf{Tr}(I_M(\theta)) \leq 2^kI_0 \; .$$

\bibliographystyle{IEEEtran}
\bibliography{di.bib}

\end{document}